\documentclass[a4paper, 12pt, oneside]{article}
\usepackage{amsmath}
\usepackage{amssymb}
\usepackage{enumerate}
\usepackage{graphicx}
\usepackage{epsf}
\usepackage{verbatim}
\usepackage{tikz}
\usepackage{systeme,mathtools}
\usepackage{float} 
\usepackage{mathrsfs}
\usepackage{csvsimple}
\usepackage{hyperref}

\usepackage{algorithm}
\usepackage{algorithmic}

\newcommand{\ENDPROCEDURE}{\STATE \textbf{end procedure}}

\usetikzlibrary{matrix,arrows,decorations.pathmorphing}

\usepackage[amsmath,thmmarks]{ntheorem}
\newcounter{thm} 

\newtheorem{Thm}[thm]{Theorem}

\newtheorem{Prop}[thm]{Proposition}
\newtheorem{Cor}[thm]{Corollary}

\theoremstyle{nonumberplain}
\theoremheaderfont{\normalfont\itshape}
\theorembodyfont{\normalfont}
\theoremseparator{.}
\theoremsymbol{\ensuremath{\square}}
\newtheorem{proof}{Proof}

\def \Z {\mathbb Z}

\def \I {\mathcal{I}}

\begin{document}

\title{Optimization of the directed spanning trees using the weighted matroid intersection algorithm} 
\author{Binhong Jiang, Gehao Wang}
\date{}
\maketitle

\begin{abstract}
In this paper, we consider the problem of updating the directed minimum spanning tree (DMST), when the given sample tree is subject to the weight changes, edge deletions and edge insertions. We present an implementation for updating the tree to a DMST using the weighted matroid intersection algorithm. Our algorithm focuses on maintaining a dynamic auxiliary graph, which plays a central role in the matroid intersection algorithm, and governs the iterations from the given tree to a DMST. Each iteration is guaranteed to yield an improved solution. We also provide an implementation of this algorithm and some experimental analysis.
\end{abstract}

{\bf Keywords:} directed spanning tree, matroid intersection, optimal arborescences.

\section{Introduction}
Let $G=(V,A)$ be a directed graph (digraph), formed by the vertex set $V$ and directed edge set $A$. We will call the directed cycles in $G$ dicycles. A directed spanning tree  $T$ in $G$ rooted at a vertex $r$, also known as the $r$-arborescence, is a connected acyclic subgraph of $G$ with vertex set $V$, such that $r$ has no incoming edge and every other vertex has exactly one. Once we define a weight (cost) function on the directed edges of $G$, the graph $G$ becomes a weighted digraph. Then, we say that $T$ is a directed minimum spanning tree (DMST) in $G$, if the total weight of edges of $T$ is minimum. We will assume that the weights are integers. 

The computation of the directed minimum spanning tree has been extensively studied ever since the independent work in \cite{CL}, \cite{E} and \cite{Bo}, usually referred to as Edmonds' algorithm, (see also \cite{K}).  It is well-known that finding a DMST can be solved in polynomial time. The improvements of Edmonds' algorithm on the complexity include Tarjan's algorithm in \cite{T} and an implementation of Fibonacci heaps in \cite{GGST}. Following the literature, there are many implementations and variants of the previously mentioned algorithms, such as \cite{L}, \cite{BKW}, \cite{FO}, \cite{EFRRV} and other related articles.

The matroid intersection algorithm is another approach to solve the DMST problem, see, e.g., \cite{La}, \cite{E1}, and \cite{Fr}. This algorithm uses an auxiliary graph to keep track of the edge inserting and exchanging, and it has many advantages for solving some special problems, such as the root-location problem, (for a brief review, see, e.g., \cite{KN}). After we define the weights on the auxiliary graph appropriately, the weighted version of this algorithm can optimize the solution by constantly finding paths in the auxiliary graph. The running time is also polynomially bounded. Later, to avoid any confusion, we say that the auxiliary graph is formed by nodes and directed arcs, instead of vertices and directed edges. 

In this paper, our main interest originates from algorithms for updating a DMST. A (fully) dynamic graph algorithm is a data structure that supports edge deletions, edge insertions, and it answers certain queries that are specific to the problem under consideration (see, e.g., \cite{EGI},\cite{HHS} and \cite{XW}). Based on Edmonds' algorithm, the authors in \cite{PTZ} introduced a dynamic graph algorithm for DMST, when there are edges deleted from or inserted to the weighted digraph. Later in \cite{EFRRV}, such problem was revisited, where some more experimental results and implementation details were presented. There are many practical applications of dynamically updating DMST in network optimization and hardware design, (see, e.g.,\cite{BA},\cite{KJ},\cite{PMW}). To the best of our knowledge, such updating algorithms using the matroid approach have not been paid much attention. 

Inspired by the work in \cite{BCG} and \cite{CG}, we discuss the problem of optimizing a directed spanning tree using the matroid intersection algorithm. We notice that, by detecting the existence of dicycles with negative weights in the auxiliary graph, one can find out which edges should be replaced in order to obtain an improved solution with less weight. Therefore, we give an implementation of this optimization algorithm that keeps doing the node exchange iteratively. In each iteration, the nodes being exchanged are in a negative dicycle, and they correspond to the directed edges that should be replaced in the tree $T$. In other words, this algorithm maintains the dynamic auxiliary graph by a combination of detecting negative dicycles, exchanging nodes and re-constructing the new auxiliary graph. It can be viewed as an algorithm to generate a sequence of directed spanning tree solutions in the order of decreasing cost, in constrast to the case of undirected minimum spanning tree introduced in \cite{SJ}. In addition, the dynamic algorithm introduced in \cite{PTZ} needs to be implemented with Edmonds' algorithm, so that the edge contraction information of the optimal tree can be recorded in an augmented data structure during the optimizing process. In our case, we aim to propose an alternative and independent approach to optimize sample trees without the original generating algorithm. Our implementation is available at \url{https://github.com/heitopai/exp} (accessed on 30 January 2026).

The structure of this paper is as follows. In Sect.\ref{S2}, we give some background on the weighted matroid intersection algorithm. Sect.\ref{S3} includes the detailed explanation of our algorithm for improving the directed spanning tree. In Sect.\ref{S4}, we present the complete description of the directed spanning tree optimization, and discuss the complexity. Finally, we give our experimental investigation in Sect.\ref{S5}.

\section{Weighted matroid intersection algorithm}\label{S2}
In this section, we briefly review the weighted matroid intersection algorithm. For more details, we refer to \cite{BA}, \cite{CCPS}, \cite{O}, \cite{S} and other related materials. 

Generally, for two matroids $(S,\I_{1})$ and $(S,\I_{2})$ defined on a finite set $S$, the collection $\I_1\cap \I_2$ consists of common independent sets. The $k$-intersection is an independent set of size (cardinality) $k$. The maximum-size common independent set problem can be solved using an augmenting path algorithm in order to find a set with maximum size, where the {\bf auxiliary graph} plays a central role.

Let $I$ be an independent set in $\I_1\cap \I_2$. The auxiliary graph $D(I)$ (also called exchangeability graph in \cite{BKYY}) corresponding to $I$ is a bipartite digraph with node set $S$ defined as follows. For any $y\in I$ and $x\in S\setminus I$,
\begin{align*}
     & \mbox{there is an arc } (y,x) \mbox{ if and only if } I-y+x\in \I_1;\\
     & \mbox{there is an arc } (x,y) \mbox{ if and only if } I-y+x\in \I_2.
\end{align*}
The directed arc describes the consequence of exchanging $x$ and $y$. 

In our context, we work on the directed spanning trees. Consider the digraph $G=(V,A)$ with no loop. We denote $\delta(v)$ to be the set of incoming edges of vertex $v$ in $G$. The first matroid we consider is the {\bf graphic matroid} $(A,\I_1)$, where $\I_1$ consists of all subsets $X$ of $A$ such that $(V,X)$ has no cycles, disregarding the directions. The second one is the {\bf partition matroid} $(A,\I_2)$, where $\I_2$ consists of all subsets $X$ of $A$ such that, for every $v\in V$ and $v\neq r$,
\begin{equation*}
	|\delta(v)\cap X|\leq 1 \quad\mbox{and}\quad |\delta(r)\cap X|=0.
\end{equation*}
Here the root vertex $r$ does not have to be the same for all independent sets of $\I_2$. For example, in Figure \ref{E1}, we can see that $\left\{y_1,y_2,y_3\right\}$ is a common independent set, which is a directed spanning tree of the digraph in Figure \ref{graph:G}.
\begin{figure}[H]	
\begin{minipage}{0.5\textwidth}
	\centering
	\includegraphics[width=5.5cm]{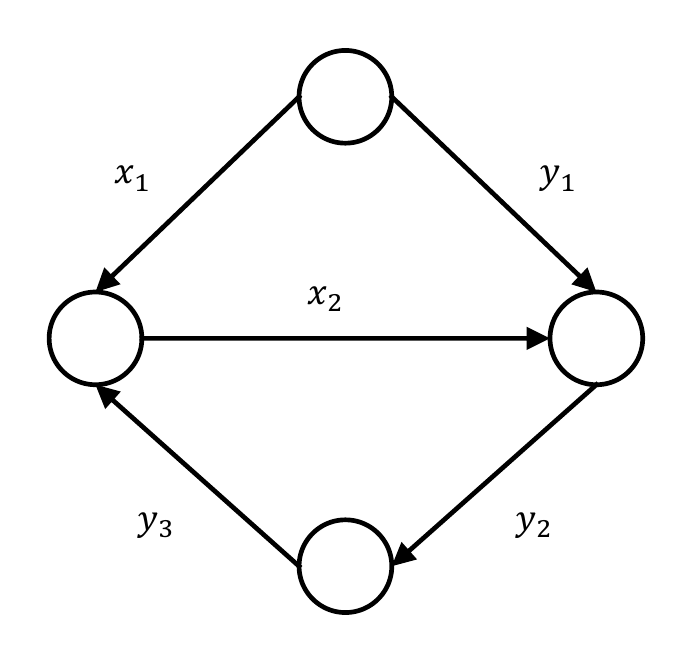}
		\caption{A digraph $G$ }\label{graph:G}
\end{minipage}
\begin{minipage}{0.5\textwidth}
	\centering
\includegraphics[width=5.5cm]{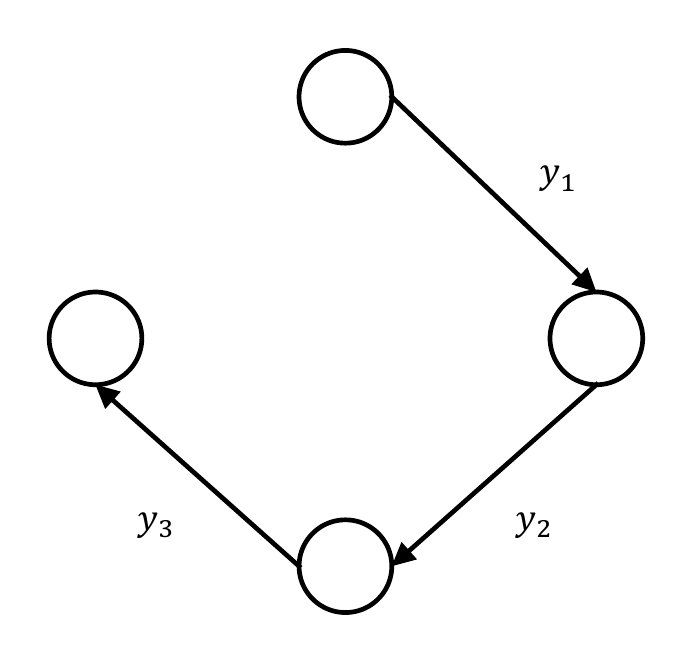}
\caption{A directed spanning tree for $G$}\label{E1}  
\end{minipage}
 \end{figure}
The auxiliary graph for the digraph $G$ is presented in Figure \ref{E2}. The arc $(y_1,x_1)$ means that, after replacing $y_1$ with $x_1$, the set $\left\{x_1,y_2,y_3\right\}$ still forms a spanning tree. The arc $(x_2,y_1)$ indicates that the set $\left\{x_2,y_2,y_3\right\}$ forms a subgraph, where each vertex except the root has exactly one incoming edge.
\begin{figure}[H]	
	\centering
	\includegraphics[width=5cm]{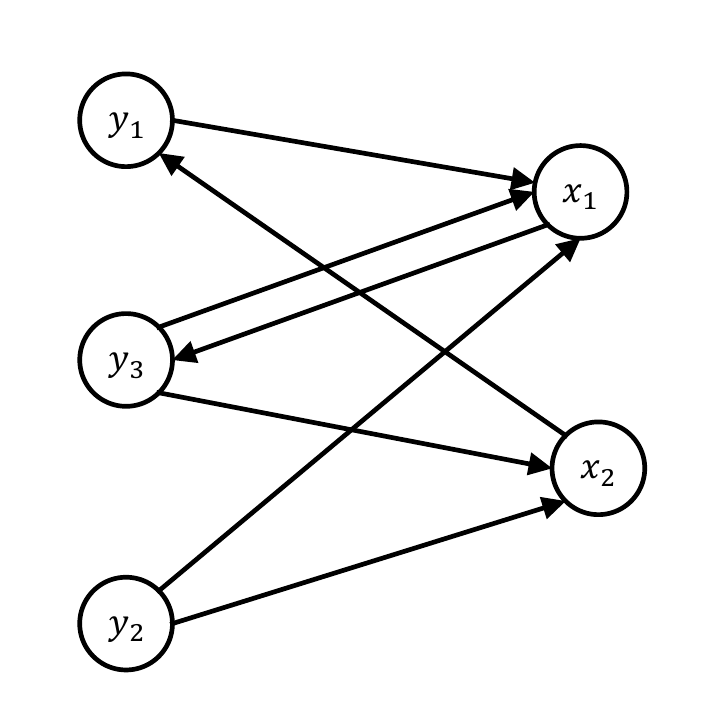}
	\caption{The auxiliary graph}\label{E2}  
\end{figure}
 
Usually, in order to obtain an independent set $I$ of maximum size, we can perform the following algorithm with the input $I$, which can be initialized as $I=\emptyset$. Consider the two subsets in $S\setminus I$,
\begin{align*}
	&X_1:=\{x\in  S\setminus I \,|\,I+x\in \I_1\}, \\
	&X_2:=\{x\in  S\setminus I \,|\,I+x\in \I_2\}.
\end{align*}
Suppose there is a path in the auxiliary graph $D(I)$ from a node $x_1$ in $X_1$ to a node in $X_2$. We take a shortest such path 
\begin{equation*}
P=\{x_1,y_1,x_2,y_2,\dots,x_n,y_n,x_{n+1}\} \quad\mbox{with}\quad x_{n+1}\in X_2, \,n\geq 0.
\end{equation*}
Let $I^{'}=I\bigtriangleup P$ be the output, where the notation $\bigtriangleup$ means the {\bf symmetric difference}:
\begin{equation*}
	I\bigtriangleup P=\left(I\setminus \{y_1,\dots,y_n\}\right)\cup \{x_1,\dots,x_{n+1}\}. 
\end{equation*} 
Then $|I^{'}|=|I|+1$. If there is no such path, then $I$ is already a maximum-size common independent set.

The weighted matroid intersection algorithm is performed on the auxiliary graph with a weighted setting. For example, let $w:S\rightarrow \Z$ be a weight function, and $I$ be an independent set.  We define
\begin{equation*}
	w(I)=\sum_{y\in I}w(y).
\end{equation*}
The maximal-weight common independent set problem is therefore to look for a set $I$ with $w(I)$ being the maximal. We can define a cost function on the nodes of the auxiliary graph $D(I)$ as
\[
c(e)=
\begin{cases}
	w(e) & e\in I ; \\
	-w(e)  & e\in S\setminus I.
\end{cases}
\]
The cost of a path $P$ is $c(P)=\sum_{e\in P}c(e)$. In this way, the maximal $k$-intersection problem can be solved via augmentation along shortest-cheapest path. That is, among all the minimum cost path $P$, we also require $P$ to be the shortest, before the augmentation $I^{'}=I\bigtriangleup P$. 

On the other hand, if $I$ is already a $k$-intersection, then the following result (see, e.g., \cite{BCG} and \cite{F}) not only allows us to check whether $w(I)$ is maximal using the auxiliary graph, but also shows when to perform the node exchange in order to optimize $I$. 

\begin{Thm}\label{BCG1}
	\begin{enumerate}[(i)]
		\item $I$ is a maximal $k$-intersection if and only if there are no negative dicycles in the auxiliary graph $D(I)$ with the cost function $c(e)$.
		\item Let $C$ be a dicycle in $D(I)$, and let $X$ and $Y$ be the node sets of $C$ in  $S\setminus I$ and $I$ respectively. If $I'=I-Y+X$ is not a $k$-intersection, then $C$ contains a sub-dicycle (a dicycle on a proper subset of the nodes of $C$).
		\item Let $C$ be a negative dicycle in $D(I)$ that contains no negative sub-dicycle. Then $I'=I-Y+X$ is a $k$-intersection with $w(I')>w(I)$.
	\end{enumerate}
\end{Thm}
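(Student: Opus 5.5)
We prove the three parts in the order (ii), (iii), (i), since (iii) and the ``if'' direction of (i) both rest on the exchange mechanism of (ii). The tools are the standard facts for a single matroid $M$ and independent set $I$. Write $D_M(I)$ for the bipartite exchange digraph, with an arc $y\to x$ when $I-y+x$ is independent in $M$.

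\emph{Fact A (unique perfect matching lemma).} Let $Y\subseteq I$ and $X\cap I=\emptyset$ with $|X|=|Y|$. Suppose $D_M(I)$ restricted to $X\cup Y$ has a perfect matching $N$ that is the \emph{unique} perfect matching there. Then $I-Y+X$ is independent.

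\emph{Fact B (matching for a basis exchange).} If $I$ and $I-Y+X$ are both independent with $|X|=|Y|$, then $D_M(I)$ has a perfect matching between $X$ and $Y$.

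For $\I_1$ the arcs are $(y,x)$, and for $\I_2$ the arcs $(x,y)$ mean exchangeability in $M_2$.

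\textbf{Part (ii).} The arcs of $C$ from $Y$ to $X$ form a perfect matching $N_1$ in $D_{M_1}(I)$ on $X\cup Y$. The arcs from $X$ to $Y$ form a perfect matching $N_2$ in $D_{M_2}(I)$. Suppose $I'=I-Y+X$ is not a $k$-intersection; note $|I'|=k$ automatically. Then for some $i$ the matching $N_i$ is not the unique perfect matching in $D_{M_i}(I)[X\cup Y]$, by Fact A. A second matching $N_i'$ makes $N_i\triangle N_i'$ a union of alternating cycles. The key step is to combine one such alternating cycle with the arcs of $C$ (the matching $N_{3-i}$) to produce a closed walk whose arcs all lie in $D(I)$ and that uses a proper subset of the nodes of $C$. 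Concretely, an arc $(y,x')$ of $N_i'\setminus N_i$ with $x'$ not the successor of $y$ on $C$ is a chord of $C$. Following $C$ from $x'$ back to $y$ and closing with this chord gives a dicycle that skips the original successor of $y$, hence is a sub-dicycle. We must make sure such a chord exists, and not merely reversed arcs; I expect this to follow from the standard Krogdahl/Schrijver argument (Schrijver, Thm.~39.13 style) applied to the minimal counterexample.

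\textbf{Part (iii).} Take $C$ negative with no negative sub-dicycle; we may also take $C$ minimal among negative dicycles on its node set. The plan is to show that $C$ has no sub-dicycle at all, or otherwise reduce to that case. A chord splits $C$ into two shorter dicycles $C_1,C_2$ whose node multisets together cover those of $C$, with $c(C_1)+c(C_2)=c(C)<0$, so one of them would be negative. Making this precise needs the classical decomposition: the arc multiset of $C$ plus a chord and its ``complement'' decomposes into dicycles on proper subsets. This is the main obstacle, and I would handle it via the Fujishige/Brezovec–Cornuéjols–Glover lemma. Once $C$ is known to have no sub-dicycle, (ii) gives that $I'$ is a $k$-intersection. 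Then $w(I')-w(I)=\sum_X w-\sum_Y w=-c(C)>0$.

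\textbf{Part (i).} ($\Leftarrow$) Suppose $I$ is not maximal and let $J$ be a $k$-intersection with $w(J)>w(I)$. Fact B applied in both matroids gives perfect matchings $N_1\subseteq D_{M_1}(I)$ and $N_2\subseteq D_{M_2}(I)$ between $J\setminus I$ and $I\setminus J$. Their union decomposes into disjoint dicycles of $D(I)$ whose total cost is $w(I\setminus J)-w(J\setminus I)<0$, so some dicycle is negative. ($\Rightarrow$) Suppose there is a negative dicycle; choose one with the fewest nodes. It has no negative sub-dicycle, so by (iii) we get $I'$ with $w(I')>w(I)$, contradicting maximality.
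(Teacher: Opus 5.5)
The proposal has a genuine gap in (iii), and therefore also in the half of (i) that relies on (iii).

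\textbf{What works.} Your arguments for (ii), and for the direction ``not maximal $\Rightarrow$ negative dicycle'' of (i), are fine given Facts A and B. In (ii) the chord you worry about exists trivially. If $N_i'\neq N_i$, some node of $C$ has an $N_i'$-partner different from its successor on $C$, and that arc lies in $D(I)$ by definition.

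\textbf{Why your plan for (iii) fails.} Showing that $C$ has no sub-dicycle at all cannot work. A negative dicycle with no negative sub-dicycle can still have chords. This already happens for a $4$-dicycle $y_1x_1y_2x_2y_1$ whose only chord is $(x_1,y_1)$, with $w(y_1)\geq w(x_1)$ and $c(C)<0$. Since (ii) as stated only promises \emph{some} sub-dicycle, it cannot give (iii).

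The splitting step you propose instead is false in a digraph. A chord $(u,v)$ closes only the single dicycle $v\to\cdots\to u\to v$. The complementary segment $u\to\cdots\to v$ would need the reverse arc $(v,u)$. Even then, both pieces contain $u$ and $v$, so with node costs their costs sum to $c(C)+c(u)+c(v)$, not $c(C)$.

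Appealing to the Brezovec--Cornu\'ejols--Glover lemma at this point is appealing to the content of (iii) itself. The paper also only cites \cite{BCG} and \cite{F} here. Its remark does pin down what is needed: the strengthened form of (ii), namely that if $I'$ is not a $k$-intersection and $c(C)<0$, then $C$ contains a \emph{negative} sub-dicycle.

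\textbf{How to close the gap.} You can prove this strengthening with the tools you already have by using all chords of the second matching at once.

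Write $C=y_1x_1y_2x_2\cdots y_nx_ny_1$, so the $\I_1$-arcs of $C$ are $(y_i,x_i)$. Suppose $I'\notin\I_1$. By Fact A there is a perfect matching $N_1'=\{(y_{\pi(i)},x_i)\}$ on $V(C)$ in the $\I_1$-arcs with $\pi\neq\mathrm{id}$.

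For each $i$ with $\pi(i)\neq i$, the chord $(y_{\pi(i)},x_i)$ together with the segment of $C$ from $x_i$ forward to $y_{\pi(i)}$ is a dicycle $D_i$ that avoids $y_i$. Its node set is the consecutive block $x_i,y_{i+1},\dots,y_{\pi(i)}$ of $C$. Since $\pi$ permutes its non-fixed points, these blocks, followed along the cycles of $\pi$, wrap around $C$ a whole number of times. Hence every node of $C$ lies in exactly $q$ of the $D_i$, for some integer $q\geq 1$. With node costs this gives $\sum_i c(D_i)=q\,c(C)<0$, so some $D_i$ is a negative sub-dicycle.

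The case $I'\notin\I_2$ is symmetric, using the chords $(x_i,y_k)$ with $k\neq i+1$ (indices mod $n$) of a second $\I_2$-matching.

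Consequently, if $C$ is negative with no negative sub-dicycle, then $N_1$ and $N_2$ are the unique perfect matchings on $V(C)$ in their respective exchange graphs. Fact A then yields (iii), and your proof of (i) goes through as written.
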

Note that the statement $(ii)$ above implies that if $I'$ is not a $k$-intersection and $w(C)<0$, then $C$ contains a negative sub-dicycle \cite{BCG}. This gives us the statement $(iii)$, which is the foundation of our algorithm.

\section{Improving the directed spanning trees}\label{S3}

Consider the digraph $G$ with a weight function $w:A\rightarrow \Z$. Finding a directed minimum spanning tree $T$ of $G$ is a minimum-cost maximum-size common independent set problem \cite{O}. By abuse of notation, we also let $T$ be the intersection of $\I_1$ and $\I_2$. The cost function of the auxiliary graph $D(T)$ we consider is 
\[
l(e)=
\begin{cases}
	-w(e) & e\in T  \\
	w(e)  & e\in A\setminus T.
\end{cases}
\]
Furthermore, following Theorem \ref{BCG1}, we have
\begin{Cor}
	A subgraph $T$ of $G$ is a directed minimum spanning tree rooted at $r$ if and only if $T\in \I_1\cap\I_2$, $|T|=|V|-1$ and there are no negative dicycles in $D(T)$ with the cost function $l(e)$. 
\end{Cor}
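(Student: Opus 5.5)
The plan is to reduce the corollary to Theorem \ref{BCG1}(i). We apply it to the two matroids $(A,\I_1)$ and $(A,\I_2)$ with the negated weight $w' := -w$. Under $w'$ the cost function $c$ of Section \ref{S2} becomes $c(e)=w'(e)=-w(e)$ for $e\in T$ and $c(e)=-w'(e)=w(e)$ for $e\notin T$, which is exactly $l(e)$. Moreover, a maximal-$w'$ $k$-intersection is the same thing as a minimum-$w$ common independent set of size $k$. With $k=|V|-1$, Theorem \ref{BCG1}(i) therefore says: a $(|V|-1)$-intersection $T$ has minimum $w$-weight among all $(|V|-1)$-intersections if and only if $D(T)$ has no negative dicycle with respect to $l$. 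The arc set of $D(T)$ does not depend on the weights, so only the node costs change.

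The remaining step is to identify the $(|V|-1)$-intersections with the directed spanning trees of $G$.

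First I check that every $r$-arborescence $T$ lies in $\I_1\cap\I_2$ and has $|V|-1$ edges. It is acyclic as an undirected graph, its root has in-degree $0$, and every other vertex has in-degree $1$.

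Conversely, let $X\in\I_1\cap\I_2$ with $|X|=|V|-1$.
\begin{itemize}
\item By $\I_1$, $X$ is a forest with $|V|-1$ edges, hence a spanning tree when directions are ignored.
\item By $\I_2$, every vertex has in-degree at most $1$. Since $|X|=|V|-1$, exactly one vertex $r'$ has in-degree $0$.
\item Following incoming edges backwards from any vertex cannot revisit a vertex, since $X$ has no cycles, so the walk must terminate at $r'$.
\item Hence $X$ is an $r'$-arborescence.
\end{itemize}
This is where the remark that the root in $\I_2$ ``does not have to be the same'' matters.

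For the statement with a fixed root $r$, I would interpret $\I_2$ as the partition matroid with $\delta(r)$ forced empty. Concretely, delete the edges of $\delta(r)$ from the ground set, which is harmless since they can never appear in an $r$-arborescence. Then the $(|V|-1)$-intersections are exactly the $r$-arborescences, and minimality over them is precisely the DMST property. Both directions of the corollary then follow directly from Theorem \ref{BCG1}(i).

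The main obstacle is this root bookkeeping. If $\I_2$ allows a varying root, Theorem \ref{BCG1}(i) yields minimality among all arborescences with any root, and that is stronger than minimality among $r$-arborescences. So the equivalence as stated needs $\I_2$, and thus $D(T)$, to be built with $r$ fixed, i.e.\ with $\delta(r)$ excluded, or else needs the dicycles to avoid nodes in $\delta(r)$. I would state this convention explicitly at the start of the proof and then carry out the two routine verifications above.
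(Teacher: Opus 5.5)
Your proposal is correct and follows the paper's route: the corollary comes directly from Theorem~\ref{BCG1}(i) with weight $-w$ (which turns $c$ into $l$) and $k=|V|-1$, and you spell out the identification of $(|V|-1)$-intersections with $r$-arborescences that the paper leaves implicit. Your fixed-root reading of $\I_2$ is the right one and is what Proposition~\ref{I1I2}(ii) tacitly uses: with a varying root, every $x\in\delta(r)$ would also give arcs $(x,y)$ to all $y\in T$, and the characterization would become minimality over arborescences with arbitrary root.
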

Therefore, to check whether $T$ is still a DMST after some weight changes, edge deletion or edge insertion, we can simply detect whether there exists a negative dicycle in the auxiliary graph. Later, we will discuss how to obtain an improved solution $T'$ with less weight. 

\subsection{Construct the auxiliary graph $D(T)$}

We first discuss how the auxiliary graph $D(T)$ can be constructed. (In this case, for $A\setminus T$, the subsets $X_1$ and $X_2$ defined in Sect. \ref{S2} are empty). Let us consider the following two kinds of subsets:
\begin{align}
	&N(T,x)=\{y\in T\,\,|\,\,T-y+x\in \I_{1}\};\nonumber\\
	&N(T,y)=\{x\in A\setminus T\,\,|\,\,T- y+x\in \I_{2}\} \label{R}.
\end{align}
In other words, the subset $N(T,x)$ is the set of neighbors of $x$ that belong to the directed arcs pointing at $x$, and $N(T,y)$ belong to the arcs pointing at $y$. By the definition of the auxiliary graph in Sect.\ref{S2}, all the directed arcs of $D(T)$ can be described by the following two types:
\begin{align*}
	& \I_1\mbox{ - arcs }: \left(y,x\right), \mbox{ for all } x\in A\setminus T \mbox{ and }y\in N(T,x),\\
	& \I_2\mbox{ - arcs }: \left(x, y\right), \mbox{ for all } y\in T  \mbox{ and } x\in N(T,y).
\end{align*}
Therefore, the construction of $D(T)$ is to classify the subsets $N(T,x)$ and $N(T,y)$ for all $x$ in $A\setminus T$ and $y\in T$, respectively. 

If $e$ is a directed edge in $G$, let $h_e$ be the head of $e$ and $t_e$ be the tail of $e$, that is, $e=(t_e,h_e)$. Then, 
\begin{Prop}\label{I1I2}
	\begin{enumerate}[(i)]
		\item For any $x$ in $A\setminus T$, the subset $N(T,x)$ correspond to the edges on the unique path between the vertices $h_{x}$ and $t_{x}$ in $T$ of $G$, disregarding the directions.
		\item For any $y\in T$, $N(T,y)=\delta(h_y)-\{y\}$. 
	\end{enumerate}
\end{Prop}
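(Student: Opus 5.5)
Both parts follow by unwinding the definitions of $N(T,x)$ and $N(T,y)$ in \eqref{R}, using that $T$ is a directed spanning tree rooted at $r$ with $|T|=|V|-1$. In the underlying undirected sense, $T$ is therefore a spanning tree of $G$.

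For (i), fix $x\in A\setminus T$. Because $T$ is a spanning tree (ignoring directions) and $x$ is not a loop, $T+x$ contains exactly one cycle. That cycle consists of $x$ together with the unique undirected path $P_x$ in $T$ between $h_x$ and $t_x$. Now take any $y\in T$. If $y\in P_x$, deleting $y$ breaks the unique cycle, so $T-y+x$ is acyclic and lies in $\I_1$. If $y\notin P_x$, the cycle $P_x+x$ is still contained in $T-y+x$, so $T-y+x\notin\I_1$. Hence $N(T,x)=P_x$. One small point is parallel edges: if $x$ is parallel or antiparallel to some edge $y\in T$, then $P_x=\{y\}$ and the cycle has length two. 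The argument still works, since $\I_1$ forbids such 2-cycles when directions are disregarded.

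For (ii), fix $y\in T$ and $x\in A\setminus T$. Write $v=h_y$; note $v\neq r$ because $r$ has no incoming edge in $T$. In $T$, every vertex other than $r$ has exactly one incoming edge and $r$ has none. In $T-y$ this still holds, except that $v$ now has zero incoming edges. Adding $x$ raises only the in-degree of $h_x$ by one.

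\begin{itemize}
\item If $h_x=v$, then every vertex $u\neq r$ again has $|\delta(u)\cap(T-y+x)|=1\le 1$ and $r$ has none, so $T-y+x\in\I_2$.
\item If $h_x\neq v$, there are two sub-cases. When $h_x\neq r$, the vertex $h_x$ now has two incoming edges. When $h_x=r$, we get $|\delta(r)\cap(T-y+x)|=1\neq 0$.
\end{itemize}

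The one delicate point is the remark that the root in $\I_2$ may vary. In the case $h_x=r$, the set $T-y+x$ has $|V|-1$ edges with every vertex of in-degree at most one. It therefore has in-degree exactly one at every vertex except $v$, so it could be read as independent with root $v$ instead of $r$. I will handle this by fixing the root $r$ of $T$ throughout, as the definition of $\I_2$ with $|\delta(r)\cap X|=0$ and the Corollary's phrase ``rooted at $r$'' indicate; then $T-y+x\notin\I_2$.

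So $T-y+x\in\I_2$ exactly when $h_x=h_y$ and $x\neq y$, which gives $N(T,y)=\delta(h_y)-\{y\}$. I expect this root convention to be the only real obstacle: part (ii) holds precisely under the reading that the root is $r$. Everything else is routine counting of in-degrees and the standard fundamental-cycle argument.
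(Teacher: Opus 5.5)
Your proof is correct and follows the same route as the paper's: the fundamental-cycle argument gives (i), and in-degree bookkeeping at $h_y$ gives (ii). You also prove both inclusions in (i), which the paper's one-line argument leaves implicit. Your point about the root is well taken: under the paper's remark that the root of an $\I_2$-set may vary, every $x$ with $h_x=r$ would also lie in $N(T,y)$, so (ii) genuinely requires fixing the root at $r$, which is the reading the paper's proof tacitly adopts.
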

\begin{proof}
For $x\in A\setminus T$, there is a unique path between $h_{x}$ and $t_{x}$ in $T$. Adding the edge $x$ to $T$ will give us a cycle, if we disregard the directions. Therefore,  $y$ must be an edge on the unique path between $h_{x}$ and $t_{x}$, in order to make sure that $T-y+x$ is in $\I_1$. This proves the first statement. The definition of $N(T,y)$ in Equation \eqref{R} immediately implies that the subset $N(T,y)$ corresponds to the edges $x$ in $A\setminus T$ with $h_x=h_y$. This proves the second statement.
\end{proof}

A common approach to find the path between two vertices in a tree is to use the Lowest Common Ancestor (LCA) algorithm (see, e.g. \cite{HT}). Let $P_e$ be the unique path in $T$, formed by a sequence of vertices, corresponding to a non-tree edge $e$. We first traverse upwards from $t_e$ to the root $r$, marking all visited vertices. Then we traverse upwards from $h_e$ until encountering the first marked vertex, which is the lowest common ancestor. So the procedure is $\text{LCA}(T, r, t_e, h_e)$. In this way, we can collect all the edges on the path $P_e$. The construction of $\I_1$ - arcs can be done using the following algorithm.
\begin{algorithm}[H]
	\caption{Construction of $\I_1$ - arcs}
	\begin{algorithmic}[1]\label{alg:I1}
		\REQUIRE Directed graph $G=(V,A)$, directed spanning tree $T$ rooted at $r$
		\ENSURE $\I_1$ - arcs
		\FOR{each edge $x \in A \setminus T$}
		\STATE $\text{lca} \gets \text{LCA}(T, r, t_x, h_x)$
		\STATE $\text{cur} \gets t_x$
		\WHILE{$\text{cur} \neq \text{lca}$}
		\STATE $y \gets \text{incoming edge of cur in } T$
		\STATE Create directed arc $(y, x)$
		\STATE $\text{cur} \gets \text{parent of cur in } T$
		\ENDWHILE
		\STATE $\text{cur} \gets h_x$
		\WHILE{$\text{cur} \neq \text{lca}$}
		\STATE $y \gets \text{incoming edge of cur in } T$
		\STATE Create directed arc $(y, x)$
		\STATE $\text{cur} \gets \text{parent of cur in } T$
		\ENDWHILE
		\ENDFOR
	\end{algorithmic}
\end{algorithm}

From Proposition \ref{I1I2}, we can see that the construction of $\I_2$ - arcs is very straightforward.
\begin{algorithm}[H] 
	\caption{Construction of $\I_2$ - arcs}
	\begin{algorithmic}[1]\label{alg:I2}
		\REQUIRE Directed graph $G=(V,A)$, directed spanning tree $T$ rooted at $r$
		\ENSURE $\I_2$ - arcs
		\FOR{each edge $y \in T$}
		\FOR{each edge $x \in \delta(h_y)-\{y\}$}
		\STATE Create directed arc $(x, y)$
		\ENDFOR
		\ENDFOR
	\end{algorithmic}
\end{algorithm}

\subsection{Simple negative dicycle}

We call a negative dicycle $C_0$ in $D(T)$ the {\bf simple negative dicycle} if it contains no negative sub-dicycle. By Theorem \ref{BCG1}, we can see that $T'=T\bigtriangleup C_0$ is guaranteed to be a directed spanning tree. Let $C_0$ be a simple negative dicycle in $D(T)$ with
\begin{equation}\label{C0}
	C_0=\{y_{1},x_{1},y_{2},x_{2},\dots y_{n}, x_{n}, y_{n+1}=y_{1}\}, 
\end{equation}
and $y_{i}\in T, x_{i}\in A\setminus T,$ for $i=1,\dots, n$. Removing the edges $\{y_{1},\dots y_{n}\}$ in $T$ will divide the tree $T$ into $n+1$ sub-trees. We denote $H_i$ to be the sub-tree containing the head of the directed edges $y_i$ for $1\leq i\leq n$, and the last one to be $H_{n+1}$. The arc $(x_i,y_{i+1})$ in the dicycle $C_0$ implies that  $y_{i+1}$ and $x_i$ share the same head. So the directed spanning tree $T'$ is formed by the sub-trees
\begin{equation*}
	x_n\cup H_1,\quad x_1\cup H_2,\quad \dots,\quad x_{n-1}\cup H_n \quad\mbox{and}\quad H_{n+1}.
\end{equation*}
Since $w(C_0)<0$, by the definition of the cost function $l(e)$, 
\begin{equation*}
	\sum_{i=1}^nw(x_i)<\sum_{i=1}^nw(y_i).
\end{equation*}
This means that we just obtained a directed spanning tree $T'$ with less weight. Therefore, in order to find an improved solution $T'$ with $w(T')<w(T)$, we first need to detect the simple negative dicycle in the auxiliary graph $D(T)$. 

The negative cycle problem usually combines a single source shortest path algorithm and a cycle detection strategy \cite{CG}. The single source shortest path algorithm is to find a path with minimum weight (cost) between two vertices in a graph. All the previously known classical algorithms are based on the labeling method. Dijkstra's algorithm \cite{D} is unable to handle negative weights. Other algorithms that can be used for detecting negative cycles are usually built on the famous Bellman-Ford-Moore algorithm (see, e.g., \cite{B}, \cite{FF}, \cite{M}).

For the source node (or vertex) $s$, the labeling method maintains a distance label $d(v)$ and parent pointer $p(v)$ for every node $v$ with the initiation $d(v)=\infty$, $p(v) =  \mbox{null}$, $d(s) = 0$. At each step, we select an arc $(u, v)$ satisfying $d(u)<\infty$ and $d(u)+w(u,v)<d(v)$. Then we set $d(v)=d(u)+w(u,v)$ and $p(v)=u$. We keep repeating this operation until no such arcs can be selected. The cycle detection strategies for the labeling method use the so-called parent graph. Namely, for a directed graph $D$, the parent graph $D_p$ is the dynamic graph induced by the arcs $(p(v), v)$ with $p(v)\neq \mbox{null}$ during the operation. If $D$ contains a negative cycle, then after a finite number of labeling operations, $D_p$ always has a cycle. In our case, we want to perform the \textbf{immediate cycle detection} on the directed graph. Namely, we want to detect a negative dicycle in $D_p$ when it first appears. Therefore, after applying the labeling operation on an arc $(u,v)$, if $p(v)=u$ and $v$ already appears in the previous parent pointers, then we have found a negative dicycle. Tarjan’s subtree disassembly strategy \cite{T2} is very suitable for our case. For more details about the negative cycle problem, we refer to \cite{CG}, \cite{MARRB} and references therein.

However, the usual labeling operation can not guarantee that the negative dicycle we find is a simple negative dicycle. It maintains the set of labeled nodes in a first-in, first-out (FIFO) queue, and there is usually no way to predict the next labeled node that will appear in our simple negative dicycle. Suppose $C$ is a negative dicycle in a weighted directed graph $D$. We aim to check whether $C$ is a simple negative dicycle. If not, we want to find the sub-dicycle in $C$ being a simple negative dicyle. Let us consider the subgraph $D_C$ formed by the nodes of $C$ and the arcs in $D$ whose heads and tails are also in $C$. We denote $D_C\setminus \{v\}$ to be the subgraph of $D_C$ after removing the node $v\in D_C$ and its adjacent arcs. If $D_C\setminus \{v\}$ contains no negative dicycle for all $v\in C$,  then $C$ is already a simple negative dicycle. On the other hand, if we find a negative dicycle $C'$ in $D_C\setminus \{v\}$, then we repeat this argument on $D_{C'}$, which is a subgraph of $D_C$. This procedure can be described as follows.

\begin{algorithm}[H]		
	\caption{{\sf : Procedure }NegativeSubDicycle($C$)}\label{alg:sub}
	\begin{algorithmic}[1]
		\FOR{each node $v\in C$}
		\STATE $D_C\setminus \{v\} \leftarrow$ the subgraph of $D_C$ after removing $v$ and arcs adjacent to $v$
		\IF{find a negative dicycle $C'$ in $D_C\setminus \{v\}$}
		\RETURN $C'$
		\ENDIF
		\ENDFOR
		\RETURN ``no negative sub-dicycle''
		\ENDPROCEDURE
	\end{algorithmic}
\end{algorithm}

\noindent
Eventually, we will find the simple negative dicycle $C_0$ in $C$ by traversing all vertices of $C$ at most once. It leads us to the following algorithm.

\begin{algorithm}[H]
	\caption{Simple Negative Dicycle Algorithm}\label{alg:simple}
	\begin{algorithmic}[1]
		\REQUIRE A weighted directed graph $D$.
		\ENSURE A simple negative dicycle or no ``no negative dicycle''.
		
		\STATE Run negative cycle detection on $D$
		\IF{find a negative dicycle $C$}
		\WHILE{NegativeSubDicycle($C$) return $C'$} 
		\STATE $C \leftarrow C'$ 
		\ENDWHILE
		\RETURN $C$
		\ELSE
		\RETURN ``no negative dicycle''
		\ENDIF
	\end{algorithmic}
\end{algorithm}

\section{Optimization of the directed spanning tree}\label{S4}
In this section, we present our algorithm for the optimization of a directed spanning tree $T$ by combining the algorithms introduced in the previous section. 

Let $C_0$ be the simple negative dicycle presented in Equation \eqref{C0}, where $x_i\in T'$ and $y_i\in A\setminus T'$. After the node exchange process $T'=T\bigtriangleup C_0$, we need to update the auxiliary graph $D(T)$ to $D(T')$. The $\I_1$ - arcs in the auxiliary graph $D(T')$ can be done using Algorithm \ref{alg:I1}. That is, we need to compute all the edges on the unique path between the vertices $h_{e}$ and $t_{e}$ in $T'$ of $G$, for every non-tree $e$ in $G$. For the $\I_2$ - arcs in $D(T')$, the arcs $(x_i,y_{i+1})$ in the dicycle $C_0$ implies that $h_{x_i}=h_{y_{i+1}}$. If we define a bijection map $f$ from the nodes of $D(T)$ to the nodes of $D(T')$ as
\begin{align*}
	&f(x)=x, \mbox{ for } x\neq x_i \quad\mbox{and}\quad f(y)=y,  \mbox{ for } y\neq y_i;\\
	&f(x_i)=y_{i+1} \quad\mbox{and}\quad f(y_{i+1})=x_i, 
\end{align*}
for $i=1,2,\dots,n$, then the $\I_2$ - arcs in $D(T')$ are the arcs $(f(x),f(y))$, where $x$ is in $N(T,y)$, for all $y\in T$. The pseudocode is the following.
\begin{algorithm}[H]
	\caption{Update Algorithm for $T$ and $D(T)$}
\begin{algorithmic}[1]	\label{alg:update}
		\REQUIRE Directed graph $G=(V,A)$, directed spanning tree $T$ with root $r$ and $D(T)$
		\ENSURE Updated $T$ and $D(T)$, where $T$ is a DMST
		\STATE $C \gets$ negative cycles in $D(T)$ detected
		\WHILE{$C \neq \emptyset$}
		\STATE $C_0 \gets$ simple negative cycles in $D(T)$ found by Algorithm \ref{alg:simple}
		\FOR{each $v \in C_0$}
		\STATE Delete all arcs in $D(T)$ containing $v$
		\IF{$v \in A \setminus T$}
		\STATE Update $T$ by replacing the edge $e$ with $v$, where $h_e=h_v$ 
		\STATE Exchange the node $e$ with $v$ in $D(T)$
		\FOR{each $x \in \delta(h_v)-v$}
		\STATE Create directed arc $(x, v)$
		\ENDFOR
		\ENDIF
		\ENDFOR
		\STATE Update $I_1$ - arcs in $D(T)$ using Algorithm \ref{alg:I1}
		\STATE $C \gets$ negative cycles in updated $D(T)$ detected
		\ENDWHILE
		\RETURN $T$ and $D(T)$
	\end{algorithmic}
\end{algorithm}

\subsection{Edge deletion and insertion}
In the presence of the auxiliary graph $D(T)$, the dynamic process can be handled in a more direct way when a directed edge is inserted to or deleted from the weighted digraph $G=(V,A)$. It is worth mentioning that multiple insertions and deletions can be done simultaneously.

Let $T$ be a DMST of $G$, and $e_{out} \in A$ be a directed edge that is about to be deleted from $G$. If $e_{out}\in A\setminus T$, then $T$ is still a DMST. If $e_{out}\in T$, then we simply change the weight of $e_{out}$ to $\infty$ in $G$, and start updating $D(T)$ from this node $e_{out}$. This is equivalent to the weight change operation. If the negative cycle detection fails at finding a dicyle involving $e_{out}$, that means there is no directed spanning tree after deleting $e_{out}$. 

Let $e_{in}$ be a new directed edge we want to insert to $G$ with a reasonable weight and $t_{e_{in}},h_{e_{in}}\in V$, such that $T$ may not be a DMST anymore. We modify the auxiliary graph $D(T)$ by first adding the new node $e_{in}$ to the $A\setminus T$ node set of $D(T)$. Then, we compute the subset $C(T,e_{in})$ using $\text{LCA}(T,r,t_{e_{in}},h_{e_{in}})$ to fill up the $\I_1$ - arcs of the new auxiliary graph, and the $\I_2$ - arcs $(e_{in},y)$ by finding the edges $y$ in $T$ with $h_y=h_{e_{in}}$. Finally, the algorithm attempts to find a simple negative dicycle in the new $D(T)$ from the node $e_{in}$. If there is no such cycle, then $T$ is still a DMST. 

For example, Figure \ref{ins1} is a digraph $G$ with weights. Let vertex \text{0} be the root. Then, the set $\{e_1,e_3,e_4\}$ forms a DMST for $G$, and Figure \ref{ins3} is the auxiliary graph.
\begin{figure}[h]	
	\begin{minipage}{0.5\textwidth}
		\centering
		\includegraphics[width=6.3cm]{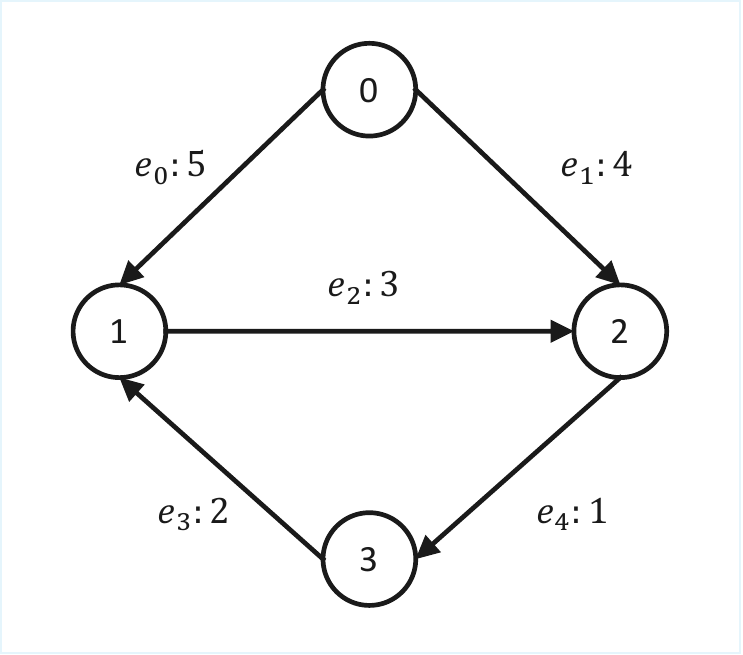}
		\caption{A digraph $G$ with weights}\label{ins1}
	\end{minipage}
	\begin{minipage}{0.5\textwidth}
			\centering
		\includegraphics[width=5.5cm]{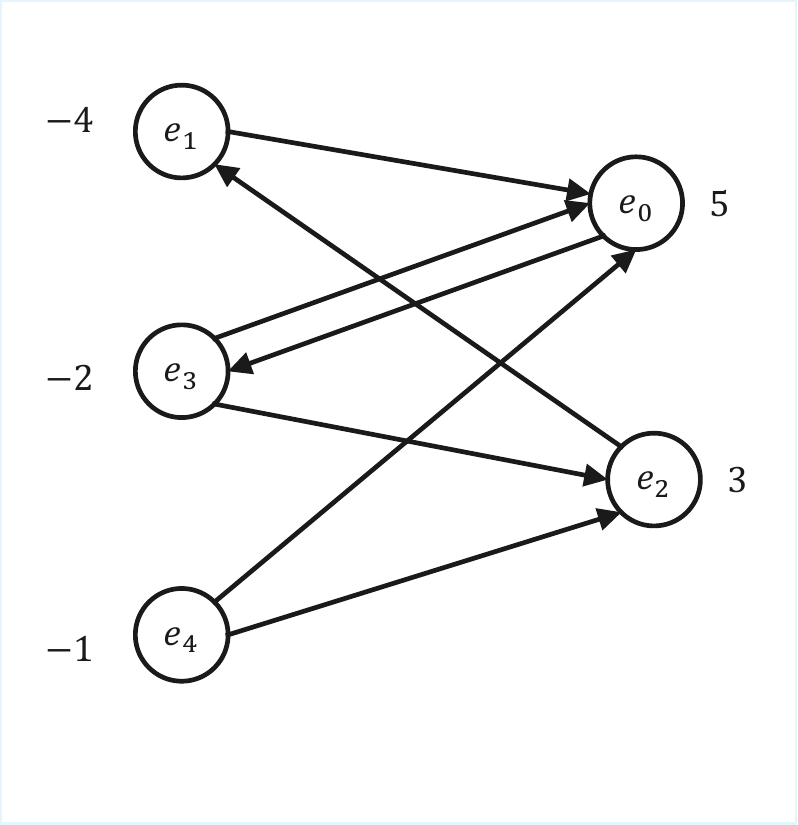}
		\caption{The original auxiliary graph }\label{ins3}  
	\end{minipage}
\end{figure}
\noindent
Let $e_6$ be a new directed edge we want to insert to $G$ with weight $1$, as shown in Figure \ref{ins4}. The head and tail of $e_6$ are vertex \text{1} and vertex \text{2}, respectively. To modify the auxiliary graph, we add the $\I_1$ - arcs $(e_3,e_6)$ and $(e_4,e_6)$ after computing the path in $T=\{e_1,e_3,e_4\}$ that connects the vertex \text{1} and vertex \text{2}, (see Figure \ref{ins5}). Since the head of $e_3$ is also vertex \text{1}, we add the $\I_2$ - arc $(e_6,e_3)$. Then, we run the detection on the cycles involving $e_6$, and find out there is a simple negative cycle $\{e_3,e_6\}$. Exchanging $e_3$ and $e_6$ will give us an improved solution. 
\begin{figure}[H]	
	\begin{minipage}{0.5\textwidth}
		\centering
	\includegraphics[width=6.3cm]{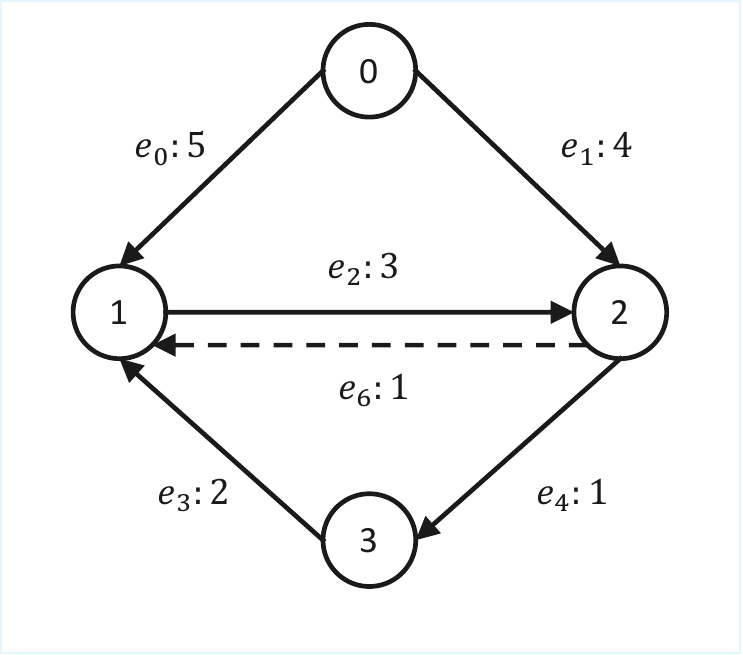}
	\caption{The insertion of $e_6$ to $G$}\label{ins4}  
	\end{minipage}
	\begin{minipage}{0.5\textwidth}
		\centering
			\includegraphics[width=5.5cm]{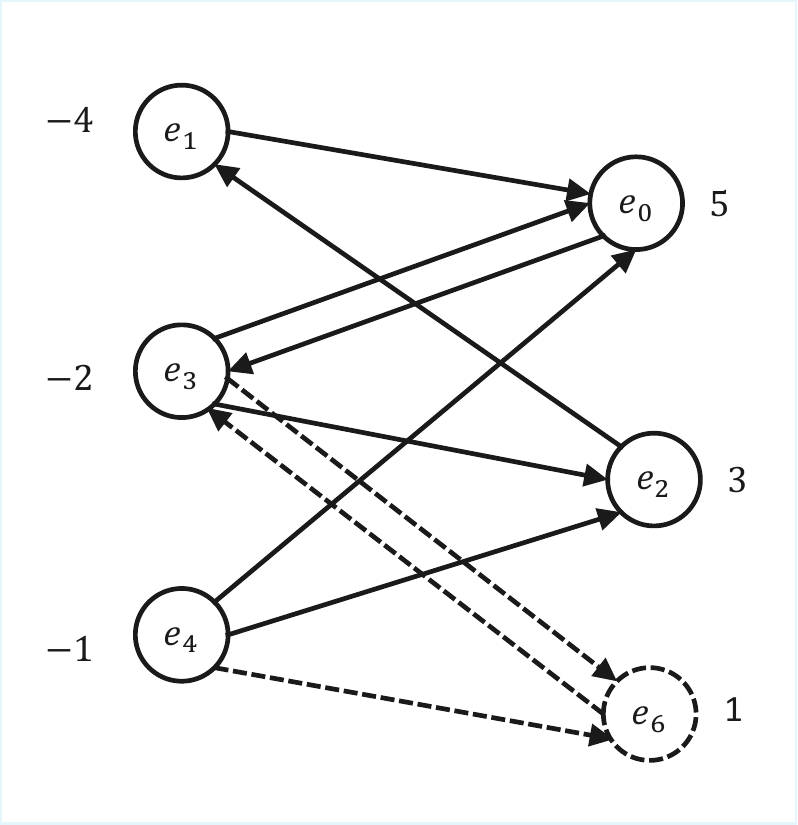}
		\caption{The new auxiliary graph}\label{ins5}  
	\end{minipage}
\end{figure}

\subsection{Complexity Analysis}

For the graph $G=(V,A)$, let $n=|V|$ and $m=|A|$. Initially, given a possible directed minimum spanning tree $T$, we construct $D(T)$. The $\I_1$-arcs in $D(T)$ can be computed using Algorithm \ref{alg:I1}. We need to compute the path in $T$ for all the $m-n+1$ non-tree edges, and the LCA procedure takes $O(n)$ time. Then, finding the $\I_1$-arcs of $D(T)$ yields an $O(mn)$ time. The number of $\I_2$-arcs in $D(T)$ is $m-n+1$. We simply check the incoming edges for each vertex of $G$ and find out which $y$ in $T$ share the same head as $x$ in $A\setminus T$. To sum up, constructing $D(T)$ can be done in $O(nm)$ time. Let $M$ be the total number of arcs in $D(T)$. The path in $T$ contains at most $n-1$ edges. Therefore, $M$ is at most $(m-n+1)n$. As mentioned in \cite{CG},  Tarjan's cycle detection runs in $O(nM)=O(n^2m)$ time.

If the algorithm finds a negative dicycle, then Algorithm \ref{alg:simple} is executed. The loops at line 3 of the Simple Negative Dicycle Algorithm in Algorithm \ref{alg:simple} needs at most $\frac{|C|}{2}$ times, ($|C|$ is an even number). Since $|C|\leq 2(n-1)$, finding a simple negative dicycle runs in $O(n^3m)$ time in the worst case. 

After the node exchange $T'=T\bigtriangleup C_0$, we can update the auxiliary graph to $D(T')$. If we keep repeating this updating operation, after a finite number of iterations, we will end up with a DMST. Let $T_o$ be the final optimality. Then, the number of iterations needed is at most $w(T)-w(T_o)$ theoretically, and the overall time complexity is $O\left((w(T)-w(T_{o}))n^{3}m\right)$. In this way, our algorithm generates a sequence of directed spanning trees with strictly decreasing cost. Practically, the procedure can be terminated early upon locating a spanning tree that satisfies specific predefined constraints. The simple negative dicycle in each iteration explicitly encodes the edge exchange operations from the current tree to the next improved solution, which can be useful in certain scenarios.

\section{Experimental Analysis}
\label{S5}

Our algorithms were implemented in Python 3.12.4, and experiments were conducted on a computer with the following hardware specifications:  Intel(R) Core(TM) i5-8400 CPU @ 2.80GHz with 16GB RAM. We performed two types of experiments. First, we generated initial directed spanning trees using a greedy algorithm and applied our matroid algorithm to test the optimization (see Algorithm \ref{alg:greedy} in Appendix A.1 for more details). The second type is our dynamic algorithm on edge deletion and insertion. The purpose of these experiments is to evaluate the correctness and the running times of our algorithm. 

For dense graphs, we set the vertex count $|V| = 50i$, where $i = 1,2,\ldots,10$. For each vertex count, we generated graphs with different edge densities by setting the edge probability $p = 0.2j$, where $j = 1,\dots,5$. For sparse graphs, we set $|V| = 100i$, where $i = 1,2,\ldots,10$. For each vertex count, we generated graphs with the edge probability $p = \frac{c}{|V|-1}$, where $c \in \{10, 20, 30, 40, 50\}$, so that the expected average degree remains constant as $|V|$ increases. The reported running time for each vertex count is the average over the generated instances.

We generate these graphs using the Erdős-Rényi model \cite{ER} via NetworkX's \texttt{fast \_gnp\_random\_graph} generator. Suppose the vertices are indexed by $\{0,1,\dots,|V| -1\}$. To ensure at least one directed spanning tree rooted at vertex $0$ exists, for each vertex $v\in \{1,2,\dots,|V|-1\}$, we add an incoming edge $(u,v)$ from a randomly selected vertex $u$ in $[0,v-1]$ to $v$. In this way, these $|V|-1$ directed edges form a connected graph, and each vertex has exactly one incoming edge. Hence it is guaranteed to be a directed spanning tree. The edge weights are uniformly sampled from integers in $[0,1000]$.

\subsection{Generating a DMST}
\label{sec:comparison}
In this subsection, we discuss the experimental results of generating directed minimum spanning trees using our matroid algorithm. We also execute NetworkX's implementation \cite{HSS} of Edmonds' algorithm \texttt{minimum\_spanning\_arborescence}, which is based on the original version \cite{E} with the complexity $O(mn)$, on the same datasets. 

By comparing the outputs, we notice that our matroid algorithm consistently computes the correct DMST for all instances. While it generally surpasses Edmonds' algorithm on dense graphs (see Figure \ref{fig:dense}), its performance gain on sparse graphs becomes increasingly pronounced with larger graph sizes, (see Figure \ref{fig:sparse}). These results indicate that our implementation of the matroid algorithm can be used as an alternative approach to compute directed minimum spanning trees. In terms of the time complexity, constructing the initial tree and the auxiliary graph runs in $O(mn)$. Therefore, the overal worst-case complexity is higher than Edmonds' algorithm. However, in comparison with the global ``contract-and-expand'' mechanism of Edmonds' algorithm, which usually causes cache misses, the negative cycle detection using Tarjan's subtree disassembly is a local operation that accesses memory in a more predictable order. We believe this explains why the matroid algorithm performs faster in our experiments. It also explains why this algorithm is very competitive particularly for sparse graphs, where $m=O(n)$. In such cases, the auxiliary graph contains fewer arcs, resulting in a limited number of node exchanges.  

\begin{figure}[h]
	\begin{minipage}{0.48\textwidth}
		\centering
		\includegraphics[width=\linewidth]{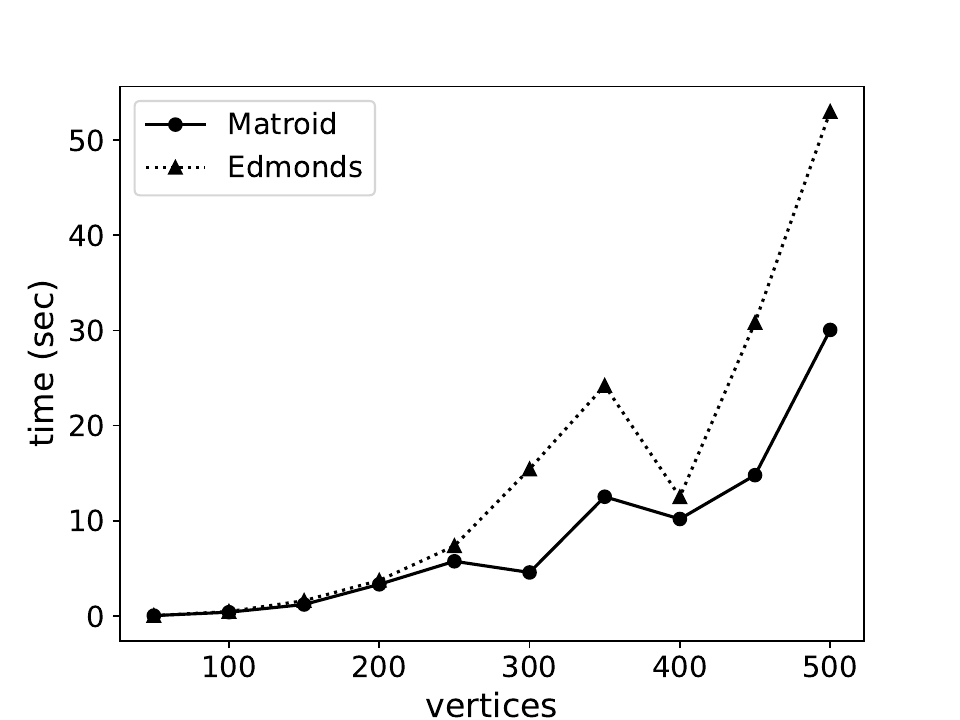}
		\caption{The comparison of running times for dense graphs}\label{fig:dense}
	\end{minipage}\hfill
	\begin{minipage}{0.48\textwidth}
		\centering
		\includegraphics[width=\linewidth]{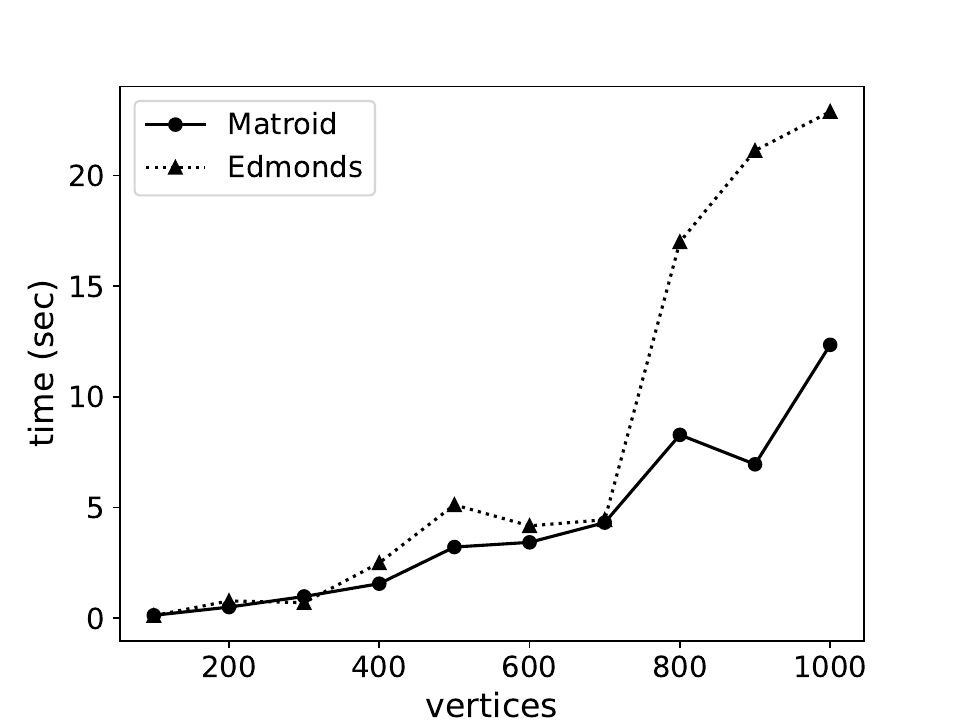}
		\caption{The comparison of running times for sparse graphs}\label{fig:sparse}
	\end{minipage}
\end{figure}

\subsection{Dynamic practice}
\label{sec:dynamic}
In the dynamic algorithm experiments, our main concern is the time savings against re-executing the static generating algorithm on the modified graph. For the edge deletion experiment, we randomly delete three edges of the original tree. For the edge insertion experiment, we randomly generate three directed edges with their heads not being the root, and the weights of inserted edges are randomly assigned values between 0 and 10. Then, we run the dynamic algorithm and our matroid algorithm on the modified graphs, and record the running times for comparison. We compute the time gain as
\begin{equation*}
	\text{time gain} = \frac{{\text{static time}} - {\text{dynamic time}}}{{\text{static time}}}\times 100\%.
\end{equation*}
Note that, we did not make comparison with the results in \cite{PTZ}, since our strategy is different and they did not provide a publicly available implementation. Their algorithm needs to be implemented with Edmonds' algorithm, and the data structure is constructed during the generation of the DMST.
\begin{figure}[H]
\begin{minipage}{0.48\textwidth}
	\centering
	\includegraphics[width=\linewidth]{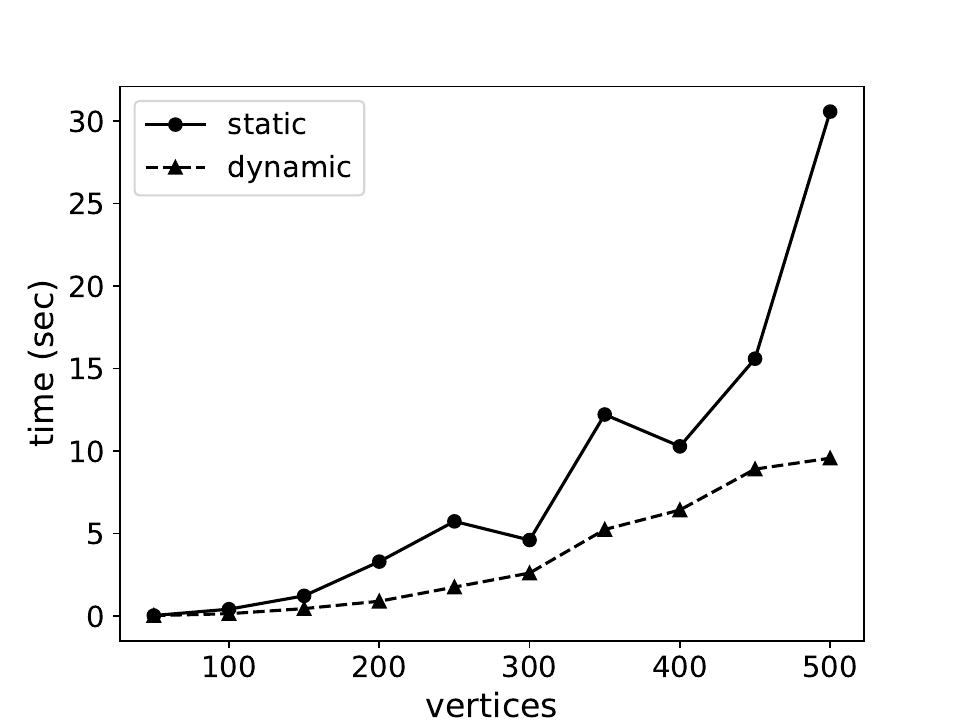}
	\caption{The performance of the edge deletion on dense graphs}\label{fig:dense_del}
\end{minipage}\hfill
\begin{minipage}{0.48\textwidth}
	\centering
	\includegraphics[width=\linewidth]{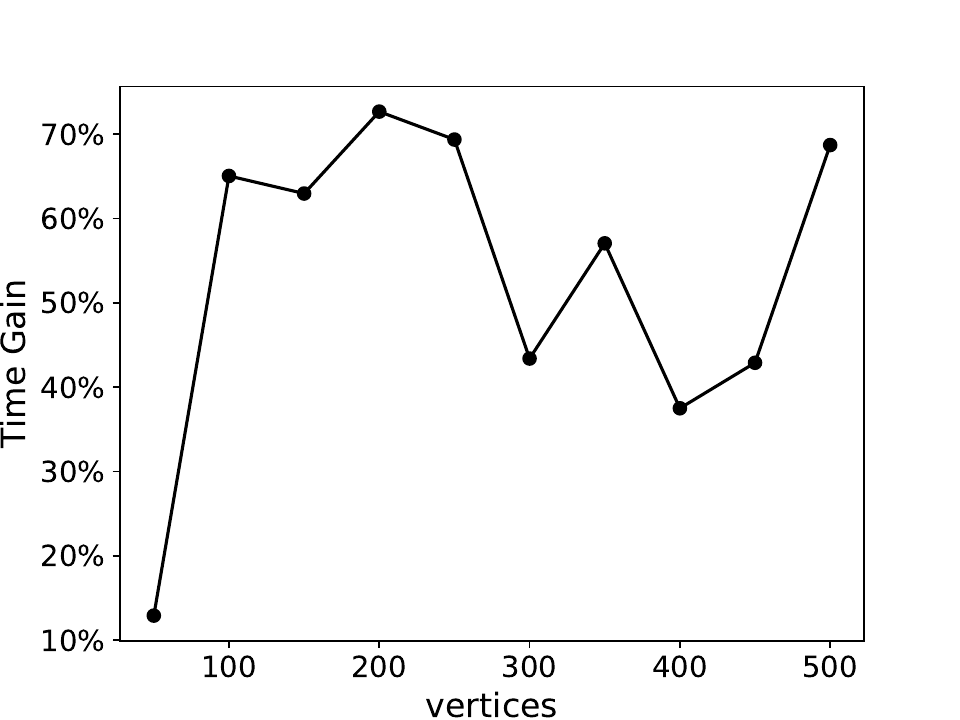}
	\caption{The time gain for the edge deletion on dense graphs}\label{fig:dense_del_gain}
\end{minipage}
\end{figure}
For the edge deletion of dense graphs, the optimization of the modified graph using our dynamic algorithm is largely faster than re-executing the static algorithm as shown in Figure \ref{fig:dense_del}. We can see in Figure \ref{fig:dense_del_gain} that the average time gain is around $50\%$.

\begin{figure}[H]
	\begin{minipage}{0.48\textwidth}
		\centering
		\includegraphics[width=\linewidth]{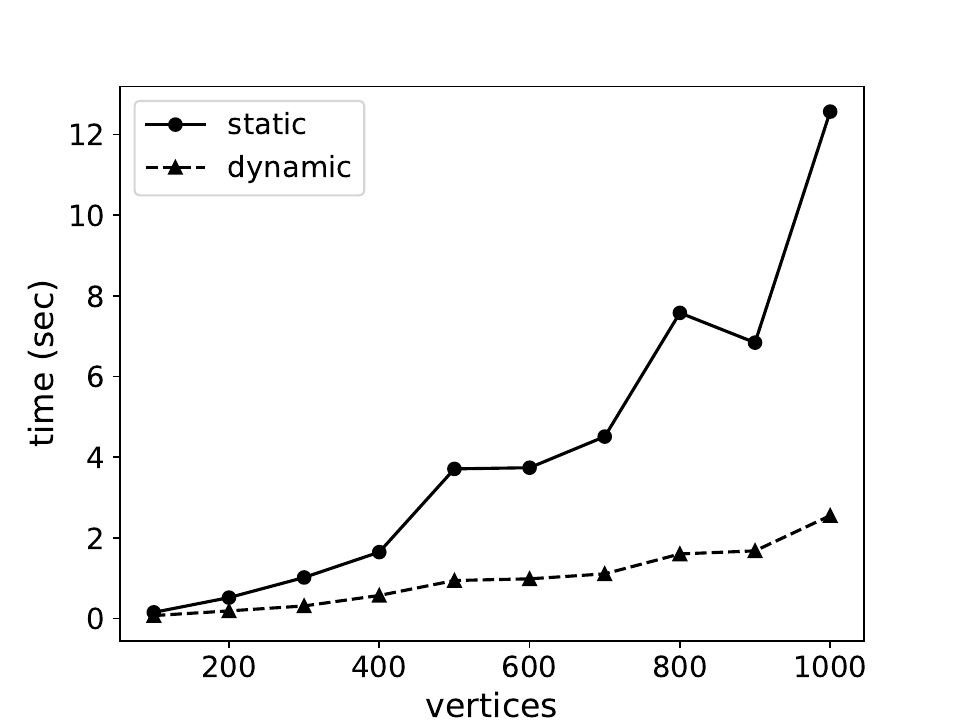}
		\caption{The performance of the edge deletion on sparse graphs}\label{fig:sparse_del}
	\end{minipage}\hfill
	\begin{minipage}{0.48\textwidth}
		\centering
		\includegraphics[width=\linewidth]{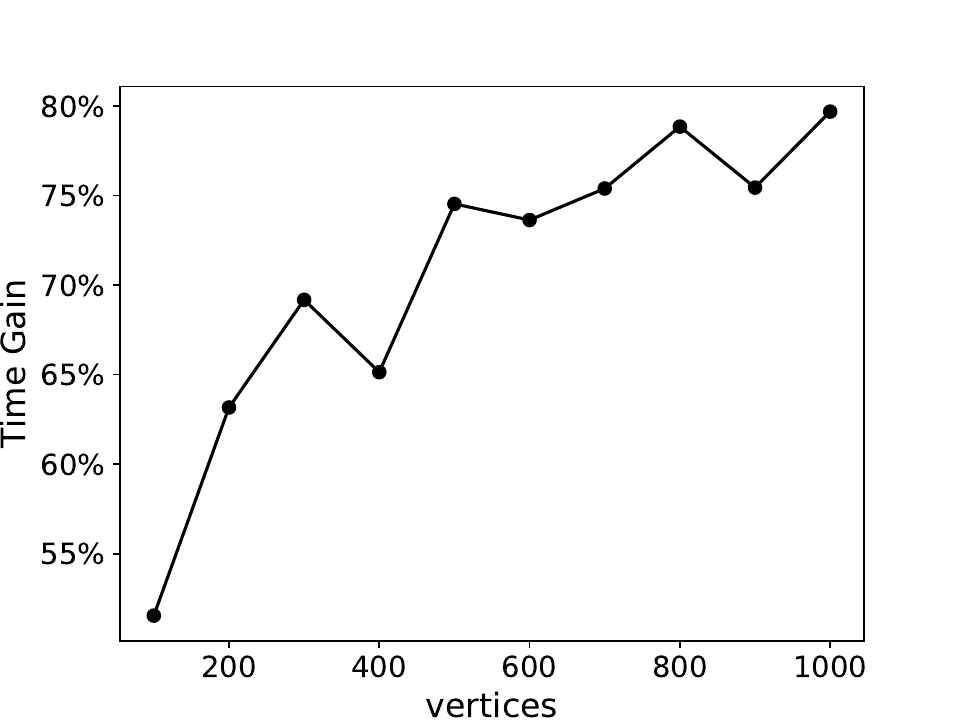}
		\caption{The time gain for the edge deletion on sparse graphs}\label{fig:sparse_del_gain}
	\end{minipage}
\end{figure}
For the edge deletion of sparse graphs, the advantage of our dynamic algorithm becomes more evident (see Figure \ref{fig:sparse_del} ). As shown in Figure \ref{fig:sparse_del_gain}, we noticed over $50\%$ average time gain in all of the instances. For the edge insertion operations, as shown in Figure \ref{fig:dense_ins_gain} and Figure \ref{fig:sparse_ins_gain}, the dynamic algorithm showed a significant time reduction compared to the static algorithm in all instances, regardless of the type of graphs. The average time gain on both dense and sparse graphs are over $75\%$ against our static matroid algorithm (see Figure \ref{fig:dense_ins} and Figure \ref{fig:sparse_ins}).

\begin{figure}[H]
	\begin{minipage}{0.48\textwidth}
		\centering
		\includegraphics[width=\linewidth]{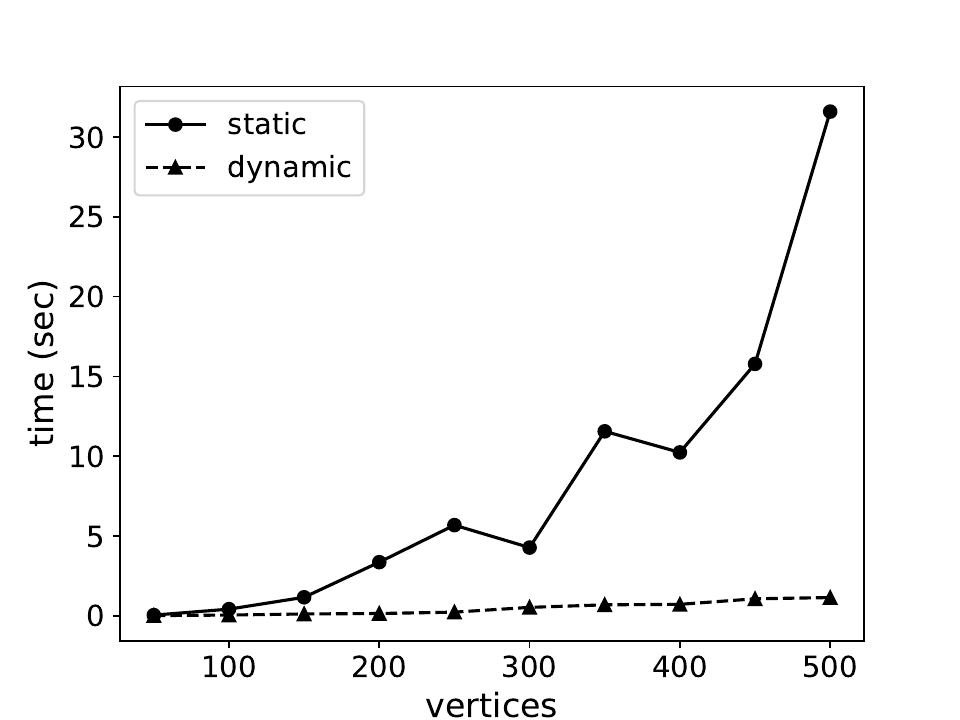}
		\caption{The performance of the edge insertion on dense graphs}\label{fig:dense_ins}
	\end{minipage}\hfill
	\begin{minipage}{0.48\textwidth}
		\centering
		\includegraphics[width=\linewidth]{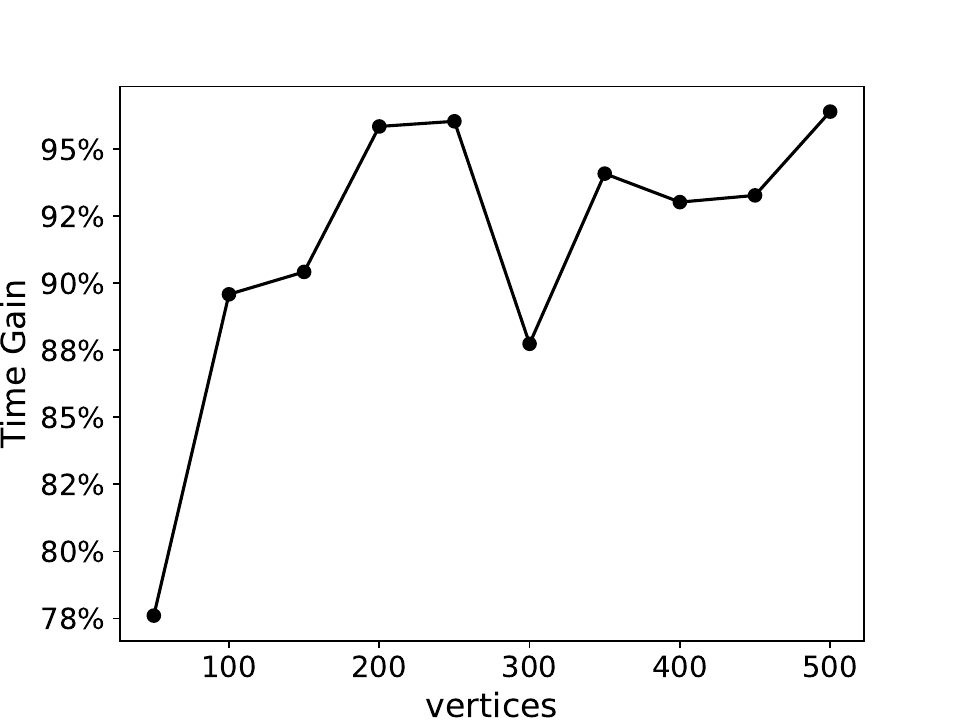}
		\caption{The time gain for the edge insertion on dense graphs}\label{fig:dense_ins_gain}
	\end{minipage}
\end{figure}

\begin{figure}[H]
	\begin{minipage}{0.48\textwidth}
		\centering
		\includegraphics[width=\linewidth]{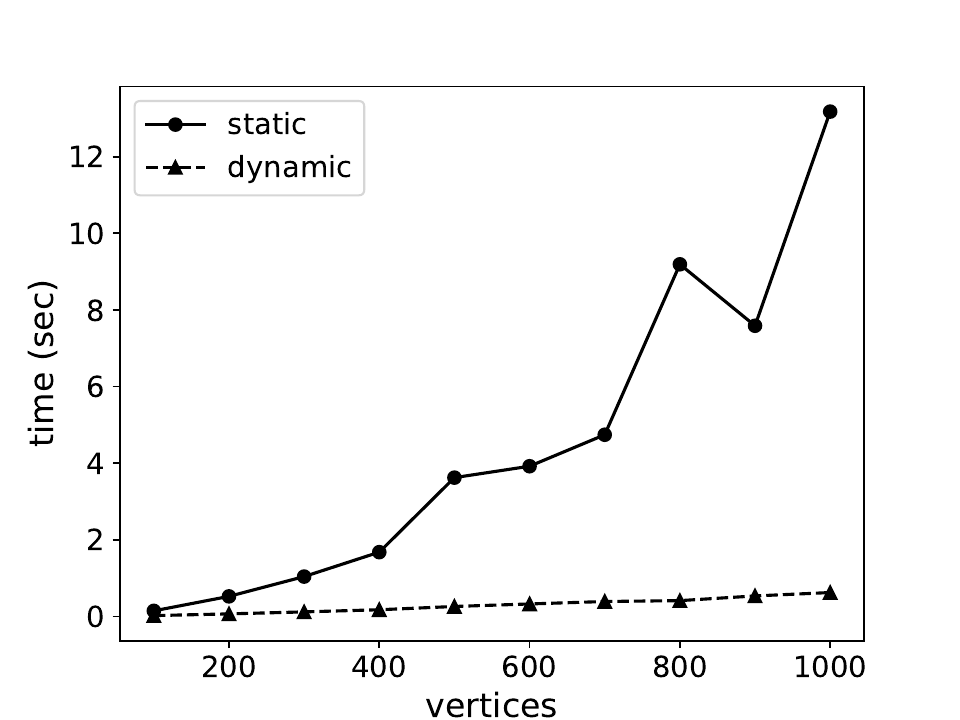}
		\caption{The performance of the edge insertion on sparse graphs}\label{fig:sparse_ins}
	\end{minipage}\hfill
	\begin{minipage}{0.48\textwidth}
		\centering
		\includegraphics[width=\linewidth]{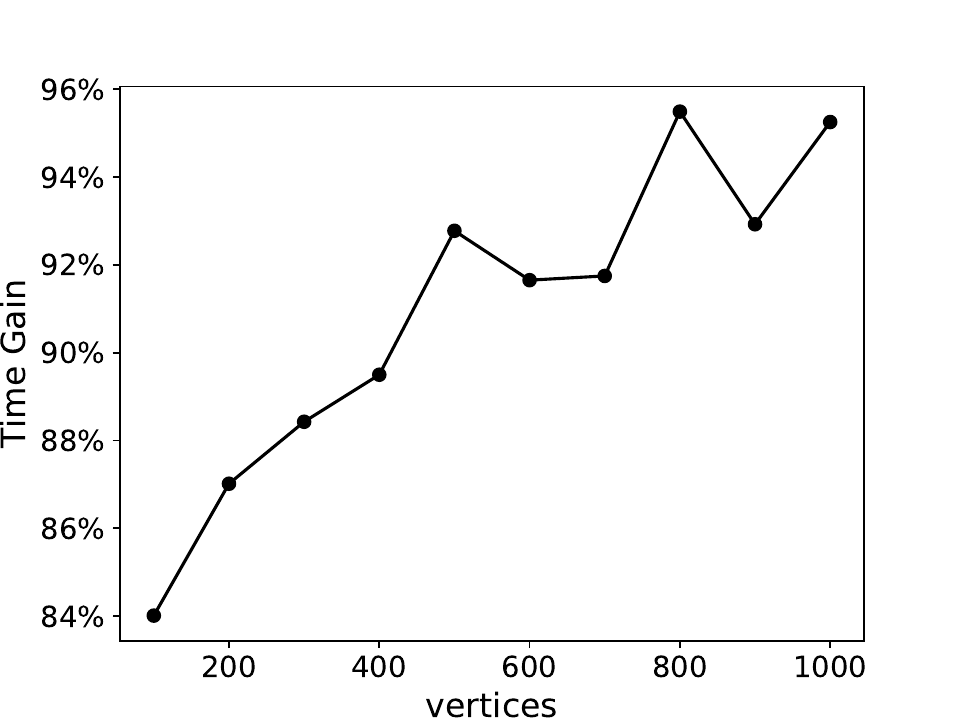}
		\caption{The time gain for the edge insertion on sparse graphs}\label{fig:sparse_ins_gain}
	\end{minipage}
\end{figure}

To sum up, the experimental evaluation of our dynamic algorithm using the matroid approach meets our expectations. By focusing on updating and maintaining the auxiliary graph, the algorithm can improve the sample tree consistently and achieve an optimal solution much faster than the static algorithm. 

\section{Conclusion}\label{S6}
In this paper, we study the problem of dynamically updating a directed spanning tree using a matroid intersection algorithm. We have designed a complete algorithm that iteratively improves sample trees — from initializing a sample tree and auxiliary graph to deriving the final optimal solution. The procedure we add to standard cycle detection — specifically, detecting simple negative cycles — is crucial, as only these cycles enable precise edge exchanges and yield improved solutions. In each iteration, we perform global cycle detection on the modified auxiliary graph. This ensures that the algorithm avoids becoming trapped in local minima. For future work, given that the auxiliary graph is bipartite, we may accelerate the detection of simple negative cycles. We also aim to further enhance performance on dense graph.

\section*{Acknowledgement}
The authors would like to thank the reviewers very much for their useful comments and suggestions, which improved this paper. The authors declare that all the data that support the results of this study are available on request from the corresponding author. All authors declare that they have no conflicts of interest.

\section*{Appendix}
\subsection*{A.1 Initiate a directed spanning tree}
To obtain an initial directed spanning tree rooted at $r$, we perform a similar greedy algorithm as Edmonds'. First, we sort the directed edge set $A$ of the graph $G=(V,A)$ in an ascending order of the weights, and let $H=\emptyset$. Then, at each step, we choose the first available edge $e$ from the sorted edge set with $h_e\neq r$ and add it to $H$, if the resulting subgraph $(V,H)$ contains no cycles, disregarding the directions, and the vertex $h_e$ has no other incoming edge. In this way, we either create a DMST when $|H|=|V|-1$, or a directed forest $(V,H)$. If we obtain a directed forest after traversing all edges, we further traverse the vertex set $V$ to find out all the vertices $v\neq r$ that has no incoming edges in $(V,H)$. For each such vertex $v$, we add a dummy edge $(r,v)$ with infinite weight to both $A$ and $H$, and denote the new graphs to be $G'=(V,A')$ and $T=(V,H')$. In this way, $T$ is our initial directed spanning tree of $G'$. Using our matroid algorithm, we can update $T$ into a DMST of the original graph $G$. We use a disjoint-set (Union-Find) data structure to maintain the direct forest. The time complexity is $O(m\log m)$, due to the sorting step being the dominant term. It can also be written as $O(m\log n)$, since $m\leq n(n-1)$. The pseudocode is the following.
\begin{algorithm}[H]
	\caption{Initial Directed Spanning Tree}
	\label{alg:greedy}
	\begin{algorithmic}[1]
		\REQUIRE Directed graph $G=(V,A)$, root vertex $r$
		\ENSURE Directed spanning tree rooted at $r$
		\STATE Let $H \gets \emptyset$
		\STATE Sort edge set $A$ by weight in an ascending order
		\STATE Initialize Union-Find structure
		\FOR{each sorted directed edge $e$}
		\IF{\textsc{Find}($h_e$) $\neq$ \textsc{Find}($t_e$), $h_e$ has no incoming edges in $H$ and $h_e \neq r$}
		\STATE \textsc{Union}($h_e$, $t_e$)
		\STATE Add $e$ to $H$
		\IF{$|H|=|V|-1$}
		\RETURN $H$
		\ENDIF
		\ENDIF
		\ENDFOR
		\FOR{each vertex $v \in V$}
		\IF{$v \neq r$ and $v$ has no incoming edges in $H$}
		\STATE Add directed edge $(r,v,+\infty)$ to $A$ and $H$
		\ENDIF
		\ENDFOR
		\RETURN $H$
	\end{algorithmic}
\end{algorithm}

\vspace{10pt} \noindent
\\
\footnotesize{\sc binhong jiang }\\
Department of Mathematics,\\
College of Information Science and Technology/College of Cyberspace Security,\\
Jinan University, Guangzhou, China. \\
\footnotesize{E-mail address:  bhjiang@stu2022.jnu.edu.cn}
\vspace{10pt} \noindent
\\
\footnotesize{\sc gehao wang }\\
Department of Mathematics,\\
College of Information Science and Technology/College of Cyberspace Security,\\
Jinan University, Guangzhou, China. \\
\footnotesize{E-mail address:  gehao\_wang@hotmail.com}


\begin{thebibliography}{99}
	\bibitem{BA} A.L. Barab\'asi, {\it Network Science}; Cambridge University Press: Cambridge, UK, 2016
	
	\bibitem{B} R.E. Bellman,  {\it On a routing problem}. Quart. Appl. Math. 16, 87-90, (1958).
	
	\bibitem{Bo} F.C. Bock, {\it An algorithm to construct a minimum directed spanning tree in a directed network}, Developments in Operations Research, pp. 29-44, Gordon and Breach, (1971).
	
	\bibitem{BCG} C. Brezovec, G. Cornuéjols  and F. Glover,  {\it Two algorithms for weighted matroid intersection}. Mathematical Programming 36, 39–53 (1986). 
	
	\bibitem{BKW} M. B\"other, O. Kißig, and C. Weyand, {\it Efficiently computing directed minimum spanning trees}, 2023 Proceedings of the Symposium on Algorithm Engineering and Experiments (ALENEX), pages 86-95, SIAM, 2023.
	
	\bibitem{BKYY} K. B\'{e}rczi, T. Kir\'{a}ly, Y. Yamaguchi and Y. Yokoi, {\it Matroid Intersection under Restricted Oracles}, SIAM Journal on Discrete Mathematics, Volume 37, Issue 2, Pages 1311-1330, 2023.
		
	\bibitem{CL}  Y. J. Chu and T. H. Liu, {\it On the shortest arborescence of a directed graph}, Scientia Sinica, 1965, 14: 1396-1400.
	
	\bibitem{CCPS} W. J. Cook, W. H. Cunningham, W. R. Pulleyblank, and A. Schrijver. {\it Combinatorial optimization}, Wiley-Interscience Series in Discrete Mathematics and Optimization, USA, 1:998, 1998.
	
   
	
	\bibitem{CG} B. Cherkassky, and A. Goldberg,  {\it Negative-cycle detection algorithms}. Math. Program. 85, 277-311 (1999).
		
		
	\bibitem{D}	E. W. Dijkstra. A note on two problems in connexion with graphs. Numerische Mathematik, pages 269-271, 1959.
		
	\bibitem{E} J. Edmonds, {\it Optimum branchings}, Journal of Research of the National Bureau of Standards, 71B (1967), p. 233.
	
	\bibitem{E1} J. Edmonds,  {\it Submodular Functions, Matroids, and Certain Polyhedra}. In: Jünger, M., Reinelt, G., Rinaldi, G. (eds) Combinatorial Optimization - Eureka, You Shrink!. Lecture Notes in Computer Science, vol 2570. Springer, Berlin, Heidelberg,  (2003).
	
	\bibitem{EFRRV} J. Espada, A. P. Francisco, T. Rocher, L. M. S. Russo, C. Vaz, {\it On Finding Optimal (Dynamic) Arborescences}. Algorithms 2023, 16, 559. https://doi.org/10.3390/a16120559.
	
	\bibitem{EGI} D. Eppstein, Z. Galil,  G.F. Italiano,  {\it 8: Dynamic graph algorithms}. In: Algorithms and Theory of Computation Handbook. CRC Presss (1999).
	
	 \bibitem{ER} P. Erdős and A. R\'enyi, {\it On random graphs}, i, Publicationes Mathematicae (Debrecen), 6 (1959).
	 
	\bibitem{FF} L.R. Ford Jr. and  D.R. Fulkerson, {\it Flows in Networks}. Princeton Univ. Press, Princeton, NJ,  (1962).
	
	\bibitem{Fr} A. Frank, , {\it A weighted matroid intersection algorithm}, Journal of Algorithms, 2(4): 328–336 (1981).
	
	\bibitem{F} S. Fujishige, {\it A primal approach to the independent assignment problem}, Journal of the Operations Research Society of Japan 20 (1977) 1-15.
	
	
	\bibitem{FO} O. Fischer and R. Oshman. {\it  A distributed algorithm for directed minimum weight spanning tree}, Distributed Computing, 36(1):57-87, 2023.

	
	\bibitem{GGST} H. N. Gabow, Z. Galil, T. H. Spencer, and R. E. Tarjan, {\it Efficient algorithms for finding minimum spanning trees in undirected and directed graphs}, Combinatorica, 6 (1986).
	
	\bibitem{HHS} K. Hanauer, M. Henzinger, and C. Schulz.  {\it Recent Advances in Fully Dynamic Graph Algorithms – A Quick Reference Guide}. ACM J. Exp. Algorithmics 27, Article 1.11 (December 2022), 45 pages, 2022.
	
	\bibitem{HSS} A. Hagberg P. J. Swart  and D. A. Schult,
	Exploring network structure, dynamics, and function using NetworkX, Report,
	Los Alamos National Laboratory (LANL), Los Alamos, NM (United States), 2008.
	
	\bibitem{HT} D. Harel, R. E. Tarjan, {\it Fast algorithms for finding nearest common ancestors}. SIAM J. Comput. 13(2), 338–355 (1984)
	
	
	\bibitem{K} R. Karp, {\it A simple derivation of Edmonds' algorithm for optimum branchings}, Networks 1 (1971): 265-272.
	
		
	\bibitem{KN} N. Kamiyama, {\it Arborescence problems in directed graphs: Theorems and algorithms}, Interdiscip. Inform. Sci. 20 (2014) 51-70.
	
	\bibitem{KJ} B. Korte, and J. Vygen. {\it Combinatorial optimization: theory and algorithms}, 6th edition, Springer, New York.
	
	\bibitem{La} E. L. Lawler, {\it Matroid intersection algorithms}, Mathematical Programming, 9: 31-56 (1975).
	
	\bibitem{L} L. Lov\'asz, {\it Computing ears and branchings in parallel}, in 26th Annual Symposium on Foundations of Computer Science (sfcs 1985), 1985, pp. 464-467
	
	\bibitem{MARRB} A. Madkour,  W. G. Aref,   F. U. Rehman,  M. A. Rahman, and S. Basalamah, (2017). {\it A survey of shortest-path algorithms}. arXiv:1705.02044.
	
	
	\bibitem{M} E.F. Moore, {\it The Shortest Path Through a Maze}. In: Proc. of the Int. Symp. on the Theory of Switching, pp. 285-292, (1959), Harvard University Press.
	
	
	\bibitem{O} J. G. Oxley , {\it Matroid Theory}, Oxford University Press, 2nd edition (2011).
	
	\bibitem{PMW} J.H. Pan, T. Mitra and W-F. Wong, {\it Configuration bitstream compression for dynamically reconfigurable FPGAs}, IEEE/ACM International Conference on Computer Aided Design, 2004. ICCAD-2004., San Jose, CA, USA, 2004, pp. 766-773, doi: 10.1109/ICCAD.2004.1382679.
	
	\bibitem{PTZ} G. G. Pollatos, O. A. Telelis,  V. Zissimopoulos, {\it Updating Directed Minimum Cost Spanning Trees}. In: C. \`Alvarez, M. Serna, (eds) Experimental Algorithms. WEA 2006. Lecture Notes in Computer Science, vol 4007. Springer, Berlin, Heidelberg.
	
	\bibitem{S} A. Schrijver, {\it Combinatorial Optimization: Polyhedra and Efficiency}, Springer Berlin, Heidelberg, 2003. 
	
	\bibitem{SJ}K. S\"orensen, and G. Janssens. {\it An algorithm to generate all spanning trees of a graph in order of increasing cost}, Pesquisa Operacional, 25 (2005): 219-229.
	
	
	\bibitem{T} R. E. Tarjan, {\it Finding optimum branchings}, Networks, 7(1):25-35, 1977.
	
	 \bibitem{T2} R. E. Tarjan,  {\it Shortest Paths}. Technical report, AT\&T Bell Laboratories, Murray Hill, NJ, (1981).

    \bibitem{XW} L. Xu, D. Wen, L. Qin, R. Li, Y. Zhang, Y. Lu, and X. Lin. {\it Minimum Spanning Tree Maintenance in Dynamic Graphs}. Proc. ACM Manag. Data 3, 1, Article 54 (February 2025), 24 pages. https://doi.org/10.1145/3709704

  


\end{thebibliography}
\end{document}